\providecommand{\U}[1]{\protect\rule{.1in}{.1in}}
\definecolor{c20}{rgb}{0.,0.0,0.}
\definecolor{c30}{rgb}{0.,0.,0.}
\definecolor{c40}{rgb}{0,0.0,0.0}
\definecolor{c50}{rgb}{0,0,0}
\newtheorem{theorem}{Theorem}
\newtheorem{case}{Case}
\newtheorem{conclusion}{Conclusion}
\newtheorem{corollary}{Corollary}
\newtheorem{example}{Example}
\newenvironment{proof}[1][Proof]{\noindent\textbf{#1.} }{\ \rule{0.5em}{0.5em}}
\begin{document}

\title{A note on Fibonacci Sequences of Random Variables }
\author{Ismihan Bayramoglu \\
Department of Mathematics, Izmir University of Economics, Izmir, Turkey\\
E-mail: ismihan.bayramoglu@ieu.edu.tr}
\maketitle

\begin{abstract}
The focus of this paper is the random sequences in the form $\{X_{0},X_{1},$
$X_{n}=X_{n-2}+X_{n-1},n=2,3,..\dot{\}},$ referred to as Fibonacci Sequence
of Random \ Variables (FSRV). The initial random variables $X_{0}$ and $%
X_{1} $ are assumed to be absolutely continuous with joint probability
density function (pdf) $f_{X_{0},X_{1}}.$ The FSRV is completely determined
by $X_{0} $ and $X_{1}$ and the members of Fibonacci sequence $\digamma
\equiv \{0,1,1,2,3,5,8,13,21,34,55,89,144,...\}.$ \ We examine the
distributional and limit properties of the random sequence $%
X_{n},n=0,1,2,... $ . \

Key words. Random variable, distribution function, probability density
function, \ sequence of random variables.
\end{abstract}

\section{Introduction}

Let $\{\Omega ,\digamma ,P\}$ be a probability space and $X_{i}\equiv
X_{i}(\omega ),\omega \in \Omega ,i=0,1$ be absolutely continuous random
variables defined on this probability space with joint probability density
function (pdf) $f_{X_{0},X_{1}}(x,y).$ Consider a sequence of random
variables $X_{n}\equiv X_{n}(\omega ),n\geq 1$ given in $\{\Omega ,\digamma
,P\}$ defined as $\{X_{0},X_{1},$ $X_{n}=X_{n-2}+X_{n-1},$ $n=2,3,..\}.$ We
call this sequence \ "the Fibonacci Sequence of Random Variables". \ It is
clear that $X_{2}=X_{0}+X_{1},$ $X_{3}=X_{0}+2X_{1},...$ and for any $%
n=0,1,2,...$ we have $X_{n}=a_{n-1}X_{0}+a_{n}X_{1},$ where \ $\
\{a_{n}=a_{n-2}+a_{n-1},n=2,3,...;a_{0}=0,a_{1}=1,a_{2}=1$ $\}$ is the
Fibonacci sequence $\digamma \equiv \{0,1,1,2,3,5,8,13,21,34,55,89,144,...\}.
$ It is also clear that the Fibonacci Sequence of Random Variables (FSRV) $%
X_{n},n=0,1,2,...$ is the sequence of dependent random variables based on
initial random variables $X_{0}$ and $X_{1},$ which fully defined by the
members of the Fibonacci sequence $\digamma $. \ We are interested in the
behavior of FSRV, i.e. the distributional properties of $X_{n}$ and joint
distributions of $X_{n}$ and $X_{n+k\text{ }}$ for any $n$ and $k.$ \ In the
Appendix Figure A1 and Figure A2, we present some examples of realizations
of FSRV in the case of independent random variables $X_{0}$ and $X_{1}$
having \textit{\ Uniform(0,1)} distribution and Standard normal distribution
with the R codes provided.

This paper is organized as follows. In Section 2, the probability density
function of $X_{n}$ is considered, followed by a discussion of two cases
where $X_{0}$ and $X_{1}$ have exponential and uniform distributions,
respectively. \ Then, there is an investigation of limit behavior of ratios
of some characteristics of \ pdf of $X_{n}$ for large $n.$ In the considered
examples, the ratio of maximums of the pdfs, modes and expected values of
consecutive elements of FSRV converge to golden ratio $\varphi \equiv \frac{%
1-\sqrt{5}}{2}=1,6180339887...$ . \ The ratio $X_{n+1}/X_{n}$ and normalized
sums \ of $X_{n}$'s for large $n$ are discussed in Section 3. In Section 4,
\ the focus is on the joint distributions of \ $X_{n}$ and $X_{n+k},$ for $%
2\leq k\leq n$ and on the prediction of $X_{n+k}$ given $X_{n}.$

\section{Distributions}

Consider $\ X_{n}=a_{n-1}X_{0}+a_{n}X_{1},$ $n=0,1,2,...$ , where $X_{0}$
and $X_{1}$ are absolutely continuous random variables with joint pdf $%
f_{X_{0},X_{1}}(x,y),$ $(x,y)\in
\mathbb{R}
^{2}$ and $a_{n},n=0,1,2,...$ is the Fibonacci sequence. \ Denote by $f_{0}$
and $f_{1}$ the marginal pdf's of $X_{0}$ and $X_{1},$ respectively.

\begin{theorem}
\label{Theorem 1}The pdf of $X_{n}$ is
\begin{equation}
f_{X_{n}}(x)=\frac{1}{a_{n}a_{n-1}}\int\limits_{-\infty }^{\infty
}f_{X_{0},X_{1}}(\frac{x-t}{a_{n-1}},\frac{t}{a_{n}})dt.  \label{a000}
\end{equation}
\end{theorem}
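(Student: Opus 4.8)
The plan is to realize $X_n = a_{n-1}X_0 + a_n X_1$ as one component of an invertible linear change of variables applied to the random vector $(X_0,X_1)$, apply the Jacobian formula for the density of a transformed absolutely continuous random vector, and then integrate out the auxiliary component. Throughout I would fix $n \geq 2$, so that both $a_{n-1}$ and $a_n$ are strictly positive; for $n = 0$ and $n = 1$ one has $a_0 = 0$, the normalizing factor $a_n a_{n-1}$ degenerates, and the statement should instead be read as the trivial assertion that $X_0$ and $X_1$ have marginals $f_0$ and $f_1$.

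First I would introduce the pair $U = a_{n-1}X_0 + a_n X_1$ (so that $U = X_n$) together with the auxiliary variable $V = a_n X_1$. The choice of $V$ is made precisely so that the inverse map reproduces the arguments appearing in (\ref{a000}); a different auxiliary variable (for instance $V = X_1$) would still yield a correct integral but not literally the expression written in the theorem. Solving the linear system gives the inverse $X_1 = V/a_n$ and $X_0 = (U-V)/a_{n-1}$, which is a bijection of $\mathbb{R}^2$ onto itself because $a_{n-1}a_n \neq 0$.

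Next I would compute the Jacobian of the inverse transformation $(u,v) \mapsto (x_0,x_1)$. Since the map is linear with $x_1 = v/a_n$ and $x_0 = (u-v)/a_{n-1}$, the Jacobian matrix is triangular and its absolute determinant equals $1/(a_{n-1}a_n)$, a constant. The change-of-variables theorem for densities then gives the joint density of $(U,V)$ as $f_{U,V}(u,v) = \frac{1}{a_{n-1}a_n}\, f_{X_0,X_1}\!\left(\frac{u-v}{a_{n-1}}, \frac{v}{a_n}\right)$.

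Finally I would recover the marginal density of $U = X_n$ by integrating the joint density over the auxiliary variable, $f_{X_n}(u) = \int_{-\infty}^{\infty} f_{U,V}(u,v)\,dv$, and then rename $u$ to $x$ and the dummy variable $v$ to $t$; this is exactly (\ref{a000}). I expect no real obstacle here, since the argument is a routine bivariate transformation. The only points demanding attention are the nonvanishing of $a_{n-1}a_n$ (hence the implicit restriction $n \geq 2$) and the selection of the auxiliary variable so that the integrand matches the stated form verbatim.
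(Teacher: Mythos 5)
Your proof is correct, and it is precisely the standard argument that the paper itself invokes: the paper's ``proof'' of Theorem \ref{Theorem 1} consists only of the remark that (\ref{a000}) and (\ref{a00}) are straightforward consequences of the theory of distributions of linear functions of random variables, with citations to Feller, Ross, Gnedenko, and Skorokhod. Your bivariate change of variables with auxiliary coordinate $V = a_n X_1$, the triangular Jacobian with absolute determinant $1/(a_{n-1}a_n)$, and the marginalization over $v$ is exactly what those references supply, so you have filled in the derivation the paper delegates to the literature rather than taken a different route. One point where you are actually more careful than the paper: you flag that the formula degenerates for $n\leq 1$ because $a_0 = 0$ makes the normalizing factor $a_n a_{n-1}$ vanish, so the statement only makes sense for $n\geq 2$; the paper states the theorem for general $n$ without this caveat.
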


If $X_{0}$ and $X_{1}$ are independent, then
\begin{equation}
f_{X_{n}}(x)=\frac{1}{a_{n}a_{n-1}}\int\limits_{-\infty }^{\infty
}f_{X_{0}}(\frac{x-t}{a_{n-1}})f_{X_{1}}(\frac{t}{a_{n}})dt.  \label{a00}
\end{equation}

\begin{proof}
Equations (\ref{a000}) and (\ref{a00}) \ are straightforward results \ of
distributions of linear functions of random variables (see eg., Feller
(1971), Ross (2002), Gnedenko (1978), Skorokhod (2005)) \
\end{proof}

\begin{case}
\label{Case 1}Exponential distribution. Let \ $X_{0}$ and $X_{1}$ \ be
independent and identically distributed (iid) random variables having
exponential distribution with parameter $\lambda =1.$ Then the pdf of $X_{n}$
is
\begin{eqnarray}
f_{X_{n}}(x) &=&\frac{1}{a_{n-2}}\left\{ \exp \left( \frac{xa_{n-2}}{%
a_{n-1}a_{n}}\right) -1\right\} \exp (-\frac{x}{a_{n-1}}),x\geq 0,\text{ }%
n=3,4,...\text{ }  \label{a001} \\
f_{X_{2}}(x) &=&x\exp (-x),x\geq 0.  \notag
\end{eqnarray}%
Below in Figure 1, the graphs of $f_{X_{n}}(x)$ for different values of $n$\
are presented.

\includegraphics[scale=0.22]{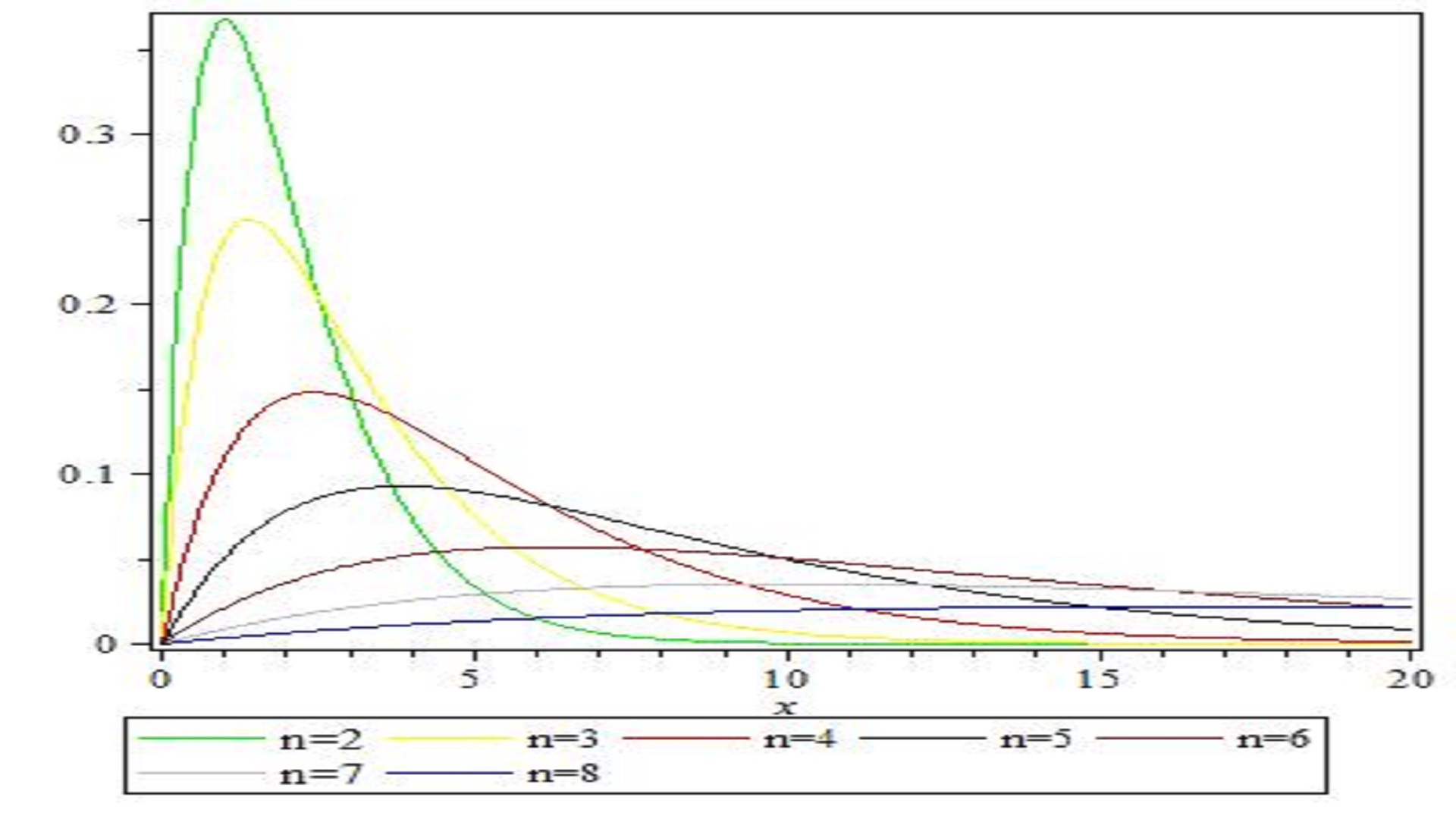}

\begin{tabular}{l}
Figure 1. Graphs of $f_{X_{n}}(x),$ $n=2,3,4,5,6,7,8,$ given in (\ref{a001})%
\end{tabular}%
The expected value of $X_{n}$ is%
\begin{eqnarray*}
EX_{n} &=&\frac{1}{a_{n-2}}\left( \int\limits_{0}^{\infty }x\exp \left(
-x\left( \frac{a_{n}-a_{n-2}}{a_{n-1}a_{n}}\right) \right)
dx-\int\limits_{0}^{\infty }x\exp (-\frac{x}{a_{n-1}})\right) dx \\
&=&\frac{1}{a_{n-1}}\left( \frac{a_{n}^{2}a_{n-1}^{2}}{(a_{n}-a_{n-2})^{2}}%
-a_{n-1}^{2}\right) =a_{n+1}.
\end{eqnarray*}%
and variance is
\begin{eqnarray*}
Var(X_{n}) &=&\frac{1}{a_{n-2}}\left( \int\limits_{0}^{\infty }x^{2}\exp
\left( -x\left( \frac{a_{n}-a_{n-2}}{a_{n-1}a_{n}}\right) \right)
dx-\int\limits_{0}^{\infty }x^{2}\exp (-\frac{x}{a_{n-1}})\right) dx \\
&=&a_{2n-1}.
\end{eqnarray*}
\end{case}

\begin{theorem}
\label{Theorem 2} Let $M_{n}=\underset{0<x<\infty }{\max }f_{X_{n}}(x)$ and $%
x_{n}^{\ast }=\underset{0<x<\infty }{\arg \max }f_{X_{n}}(x)$ be the maximum
of $f_{X_{n}}(x)$ and mode of $\ X_{n},$ $n=2,3,...,$ respectively. Then
\begin{eqnarray*}
\underset{n\rightarrow \infty }{\lim }\frac{M_{n}}{M_{n+1}} &=&\underset{%
n\rightarrow \infty }{\lim }\frac{x_{n+1}^{\ast }}{x_{n}^{\ast }}=\underset{%
n\rightarrow \infty }{\lim }\frac{E(X_{n+1})}{E(X_{n})}=\varphi \text{ and }
\\
\underset{n\rightarrow \infty }{\lim }\frac{Var(X_{n+1})}{Var(X_{n})}
&=&\varphi ^{2},
\end{eqnarray*}%
where%
\begin{equation*}
\varphi \equiv \frac{1-\sqrt{5}}{2}=1,6180339887...
\end{equation*}%
is the golden ratio.
\end{theorem}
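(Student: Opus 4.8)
The plan is to handle the four limits separately, but all of them ultimately rest on the single classical fact that $\lim_{n\to\infty}a_{n+1}/a_n=\varphi$ together with the defining recurrence $a_n=a_{n-1}+a_{n-2}$ (equivalently $a_n-a_{n-1}=a_{n-2}$). The expectation and variance ratios are the easy ones, taken straight from Case~\ref{Case 1}. Since $E(X_n)=a_{n+1}$, we get $E(X_{n+1})/E(X_n)=a_{n+2}/a_{n+1}\to\varphi$ at once. Since $Var(X_n)=a_{2n-1}$, we have $Var(X_{n+1})/Var(X_n)=a_{2n+1}/a_{2n-1}$, and writing this as $(a_{2n+1}/a_{2n})(a_{2n}/a_{2n-1})$ shows it tends to $\varphi\cdot\varphi=\varphi^2$.

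The substantive work is the mode and the maximum. First I would recast the density $(\ref{a001})$: using $a_{n-2}/(a_{n-1}a_n)-1/a_{n-1}=-1/a_n$, the product $\exp(xa_{n-2}/(a_{n-1}a_n))\exp(-x/a_{n-1})$ collapses to $\exp(-x/a_n)$, so that
\[
f_{X_n}(x)=\frac{1}{a_{n-2}}\left(e^{-x/a_n}-e^{-x/a_{n-1}}\right),\qquad x\ge 0.
\]
Differentiating and setting $f_{X_n}'(x)=0$ yields $a_n^{-1}e^{-x/a_n}=a_{n-1}^{-1}e^{-x/a_{n-1}}$, whose unique interior solution is
\[
x_n^{\ast}=\frac{a_{n-1}a_n}{a_{n-2}}\,\ln\frac{a_n}{a_{n-1}}.
\]
(One should check this is a maximum via the sign change of $f_{X_n}'$, noting $f_{X_n}(0)=0$ and $f_{X_n}(x)\to 0$.) Forming the ratio, the prefactor $\frac{a_na_{n+1}/a_{n-1}}{a_{n-1}a_n/a_{n-2}}=\frac{a_{n+1}}{a_{n-1}}\cdot\frac{a_{n-2}}{a_{n-1}}$ tends to $\varphi^2\cdot\varphi^{-1}=\varphi$, while the log-factor $\ln(a_{n+1}/a_n)/\ln(a_n/a_{n-1})$ tends to $\ln\varphi/\ln\varphi=1$ by continuity, giving $x_{n+1}^{\ast}/x_n^{\ast}\to\varphi$.

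For the maximum, the key simplification is to substitute $x_n^{\ast}$ back into $f_{X_n}$ and exploit the identity $a_n/a_{n-2}-a_{n-1}/a_{n-2}=1$. This lets the two exponential terms share a common power, and after using $1-a_{n-1}/a_n=a_{n-2}/a_n$ the awkward factor $1/a_{n-2}$ cancels, leaving the clean closed form
\[
M_n=\frac{1}{a_n}\left(\frac{a_{n-1}}{a_n}\right)^{a_{n-1}/a_{n-2}}.
\]
Then
\[
\frac{M_n}{M_{n+1}}=\frac{a_{n+1}}{a_n}\cdot\frac{\left(a_{n-1}/a_n\right)^{a_{n-1}/a_{n-2}}}{\left(a_n/a_{n+1}\right)^{a_n/a_{n-1}}};
\]
the leading factor tends to $\varphi$, and since each base tends to $1/\varphi$ and each exponent to $\varphi$, the two power terms share the common limit $\varphi^{-\varphi}$, so their quotient tends to $1$ and $M_n/M_{n+1}\to\varphi$.

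I expect the main obstacle to be the bookkeeping in the closed-form simplification of $M_n$: matching the exponents through $a_n/a_{n-2}-a_{n-1}/a_{n-2}=1$ and confirming that the sub-exponential power terms genuinely converge (rather than oscillate or grow) so that they contribute a factor of exactly $1$ in the limit. Once that closed form is secured, every one of the four limits reduces to the elementary Fibonacci ratio limit together with the continuity of $\ln$ and of $x\mapsto x^{a}$.
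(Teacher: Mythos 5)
Your proposal is correct and takes essentially the same route as the paper: differentiate the exponential-case density from Case~1 to find the unique critical point $x_n^{\ast}$, substitute it back to obtain a closed form for $M_n$ (your expression $\frac{1}{a_n}\left(a_{n-1}/a_n\right)^{a_{n-1}/a_{n-2}}$ is algebraically identical to the paper's $\frac{1}{a_n-a_{n-2}}\left(\frac{a_n}{a_n-a_{n-2}}\right)^{-a_n/a_{n-2}}$ via $a_n-a_{n-2}=a_{n-1}$), and pass to the limit using $a_{n+\alpha}/a_n\to\varphi^{\alpha}$. Your preliminary simplification of the density to $\frac{1}{a_{n-2}}\left(e^{-x/a_n}-e^{-x/a_{n-1}}\right)$ and your explicit check that both power factors converge to the common finite value $\varphi^{-\varphi}$ merely make the same argument cleaner and more careful than the paper's terse version.
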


\begin{proof}
The following can easily be verified
\begin{equation}
\frac{d}{dx}f_{X_{n}}(x)=\frac{(-\frac{x}{a_{n-1}})(e^{\frac{xa_{n-2}}{%
a_{n-1}a_{n}}}-1)}{a_{n-2}a_{n-1}}+\frac{e^{-\frac{x}{a_{n-1}}}e^{\frac{%
xa_{n-2}}{a_{n-1}a_{n}}}}{a_{n-1}a_{n}}=0.  \label{a010}
\end{equation}%
The equation (\ref{a010}) has unique solution
\begin{equation*}
x_{n}^{\ast }=\frac{a_{n-1}a_{n}\ln \left( \frac{a_{n}}{a_{n}-a_{n-2}}%
\right) }{a_{n-2}}.
\end{equation*}%
Therefore $X_{n}$ is unimodal and we have
\begin{eqnarray*}
M_{n} &=&f_{X_{n}}(x_{n}^{\ast })=\frac{1}{a_{n}-a_{n-2}}\left( \frac{a_{n}}{%
a_{n}-a_{n-2}}\right) ^{-\frac{a_{n}}{a_{n-2}}} \\
M_{n+1} &=&f_{X_{n+1}}(x^{\ast })=\frac{1}{a_{n+1}-a_{n-1}}\left( \frac{%
a_{n+1}}{a_{n+1}-a_{n-1}}\right) ^{-\frac{a_{n+1}}{a_{n-1}}}
\end{eqnarray*}%
and using%
\begin{equation*}
\underset{n\rightarrow \infty }{\lim }\frac{a_{n+\alpha }}{a_{n}}=\varphi
^{\alpha }
\end{equation*}%
we obtain
\begin{equation*}
\underset{n\rightarrow \infty }{\lim }\frac{M_{n}}{M_{n+1}}=\frac{%
x_{n+1}^{\ast }}{x_{n}^{\ast }}=\varphi .
\end{equation*}
\end{proof}

\begin{case}
\label{Case 2}Uniform distribution. Let $X_{0}$ and $X_{1}$ be iid with $%
Uniform(0,1)$ distribution. Then from (\ref{a00}) we obtain
\begin{eqnarray}
f_{X_{n}}(x) &=&\frac{1}{a_{n}a_{n-1}}\int\limits_{0}^{a_{n}}f_{X_{0}}(%
\frac{x-t}{a_{n-1}})dt=\frac{a_{n-1}}{a_{n}a_{n-1}}\int%
\limits_{0}^{a_{n}}dF_{X_{0}}(\frac{x-t}{a_{n-1}})  \notag \\
&=&\frac{1}{a_{n}}\left\{ F_{X_{0}}(\frac{x}{a_{n-1}})-F_{X_{0}}(\frac{%
x-a_{n}}{a_{n-1}})\right\}  \notag \\
&=&\left\{
\begin{array}{cc}
0, & x<0\text{ and }x>a_{n}+a_{n-1} \\
\frac{x}{a_{n}a_{n-1}} & 0\leq x\leq a_{n-1} \\
\frac{1}{a_{n}} & a_{n-1}\leq x\leq a_{n} \\
\frac{1}{a_{n}}(1-\frac{x-a_{n}}{a_{n-1}}) & a_{n}\leq x\leq a_{n}+a_{n-1}%
\end{array}%
\right. .  \label{a002}
\end{eqnarray}%
\includegraphics[scale=0.25]{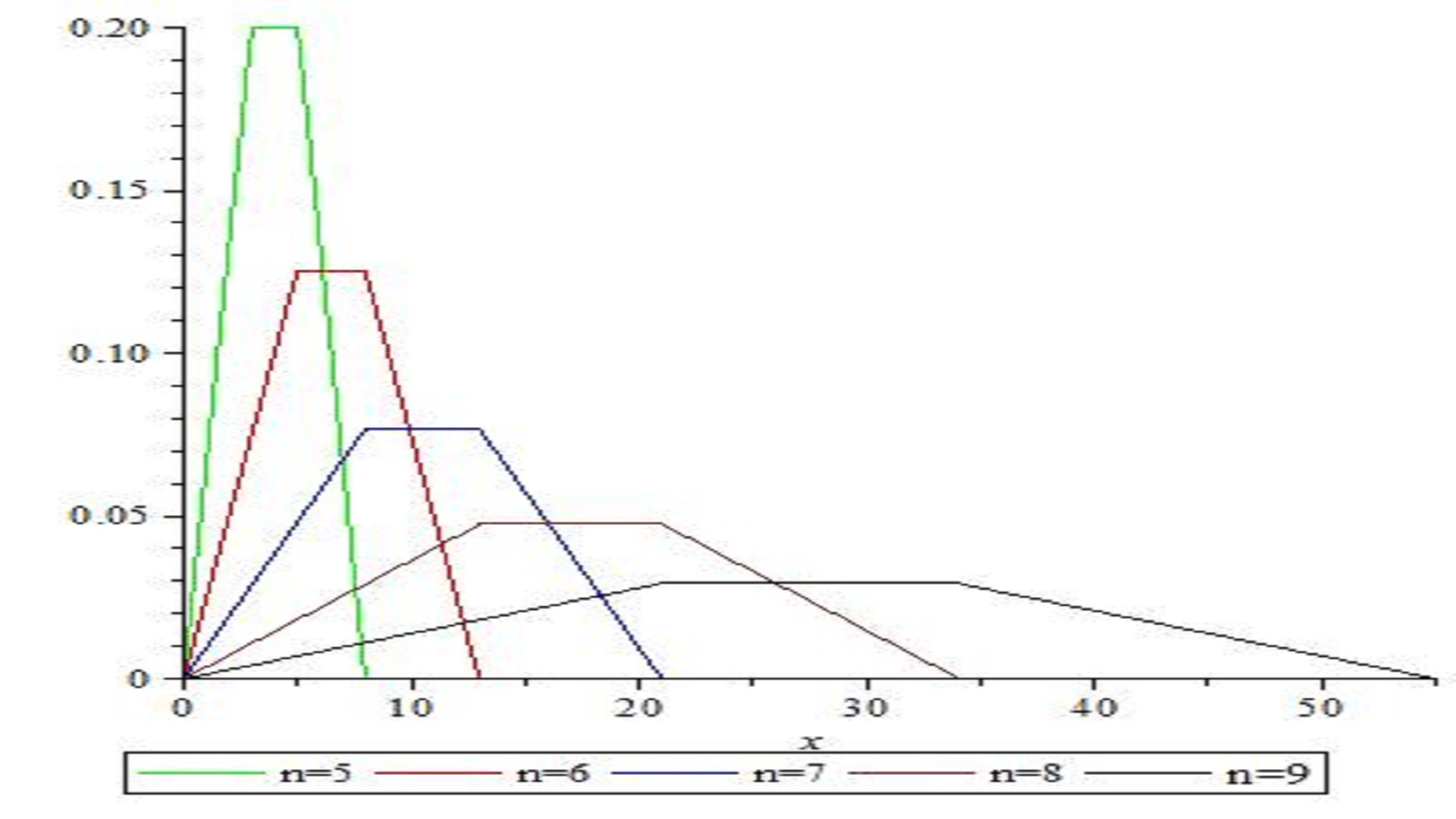}

\begin{tabular}{l}
Figure 2. Graphs of $f_{X_{n}}(x),n=5,6,7,8,9,$ given in (\ref{a002})
\end{tabular}
\\\\\\
It can be easily verified that $E(X_{n})=\frac{a_{n-1}+a_{n}}{2}=\frac{%
a_{n+1}}{2}$ and $var(X_{n})=\frac{a_{n-1}^{2}+a_{n}^{2}}{12}.$
\end{case}

\bigskip One can observe that $\ $ $f_{X_{n}}(x)$ is not unimodal, $%
f_{X_{n}}(x)$ is constant in the interval $(a_{n-1},a_{n})$ and $\ M_{n}=%
\underset{0<x<1}{\max }f_{X_{n}}(x)=\frac{1}{a_{n-1}}$ , $\ \underset{0<x<1}{%
\inf \arg \min }f_{X_{n}}(x)=a_{n-1},$ $\underset{0<x<1}{\sup \arg \min }%
f_{X_{n}}(x)=a_{n}$.

It is not difficult to observe that the similar to Theorem 1 results hold
also in this case.

\bigskip

\section{Large $n$ and normalized Fibonacci sequence of random variables}

Let $\ X_{n}=a_{n-1}X_{0}+a_{n}X_{1},$ $n=0,1,2,...$ \ be FSRV, where $X_{0}$
and $X_{1}$ are absolutely continuous random variables with joint pdf $%
f_{X_{0},X_{1}}(x,y),$ $(x,y)\in
\mathbb{R}
^{2}.$ Consider the sequence of random variables $Z_{n}\equiv \frac{X_{n+1}}{%
X_{n}},n=1,2,...$ . One has

\begin{equation*}
Z_{n}(\omega )=\frac{X_{n+1}(\omega )}{X_{n}(\omega )}=\frac{%
a_{n+1}X_{1}(\omega )+a_{n}X_{0}(\omega )}{a_{n}X_{1}(\omega
)+a_{n-1}X_{0}(\omega )}=\frac{\frac{a_{n+1}}{a_{n}}X_{1}(\omega
)+X_{0}(\omega )}{X_{1}(\omega )+\frac{a_{n-1}}{a_{n}}X_{0}(\omega )}=\frac{%
\frac{a_{n+1}}{a_{n}}X_{1}(\omega )+X_{0}(\omega )}{X_{1}(\omega )+\frac{1}{%
\frac{a_{n}}{a_{n-1}}}X_{0}(\omega )}.
\end{equation*}%
Since \ $\underset{n\rightarrow \infty }{\lim }\frac{a_{n+1}}{a_{n}}=\varphi
,$ ($\varphi =\frac{1-\sqrt{5}}{2}=1,6180339887...$ is the golden ratio), it
follows that
\begin{eqnarray*}
Z_{n}(\omega ) &\rightarrow &\varphi ,\text{ pointwise in } \\
\Omega _{1} &=&\{\omega :\varphi X_{1}+X_{0}\neq 0\text{ and }\varphi
X_{1}+X_{0}\neq \infty \}\subset \Omega \text{ }
\end{eqnarray*}%
For the normalized FSRV, the following limit relationship is valid.

\begin{theorem}
\label{Theorem 3}Let $E(X_{i})=\mu _{i},Var(X_{i})=\sigma _{i}^{2},i=0,1$
and
\begin{equation*}
Y_{n}(\omega )\equiv Y_{n}=\frac{X_{n}-E(X_{n})}{\sqrt{Var(X_{n})}}=\frac{%
X_{0}+\frac{a_{n}}{a_{n-1}}X_{1}-(\mu _{0}+\frac{a_{n}}{a_{n-1}}\mu _{1})}{%
\sqrt{\sigma _{0}^{2}+\frac{a_{n}^{2}}{a_{n-1}^{2}}\sigma _{1}^{2}}},\text{ }%
\omega \in \Omega .
\end{equation*}%
Then,
\begin{equation*}
Y_{n}\overset{}{\rightarrow }Y\equiv \frac{X_{0}+\varphi X_{1}-(\mu
_{0}+\varphi \mu _{1})}{\sqrt{\sigma _{0}^{2}+\varphi ^{2}\sigma _{1}^{2}}},%
\text{ as }n\rightarrow \infty \text{ for all }\omega \in \Omega .
\end{equation*}%
The limiting random variable $Y\equiv Y(\omega )$ \ has distribution
function (cdf)
\begin{equation}
P\{Y\leq x\}=P\{X_{0}+\varphi X_{1}\leq x\sqrt{\sigma _{0}^{2}+\varphi
^{2}\sigma _{1}^{2}}+(\mu _{0}+\varphi \mu _{1})\}.  \label{cc1}
\end{equation}
\end{theorem}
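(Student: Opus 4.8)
The plan is to exploit the single deterministic fact driving everything, namely $\lim_{n\to\infty} a_n/a_{n-1} = \varphi$, the $\alpha = 1$ instance of the relation $\lim_{n\to\infty} a_{n+\alpha}/a_n = \varphi^{\alpha}$ already invoked in the proof of Theorem \ref{Theorem 2}. The essential point is that the claimed convergence is \emph{sure}, holding pointwise for every $\omega$, rather than merely convergence in distribution. Accordingly I would fix $\omega \in \Omega$ and regard $X_0(\omega)$ and $X_1(\omega)$ as fixed real numbers. Writing $r_n = a_n/a_{n-1}$, the quantity $Y_n(\omega)$ is then a purely deterministic sequence: an affine function of $r_n$ in the numerator divided by $\sqrt{\sigma_0^2 + r_n^2 \sigma_1^2}$ in the denominator.

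Next I would pass to the limit in numerator and denominator separately. The numerator $X_0(\omega) + r_n X_1(\omega) - (\mu_0 + r_n \mu_1)$ is continuous in $r_n$ and hence tends to $X_0(\omega) + \varphi X_1(\omega) - (\mu_0 + \varphi \mu_1)$, while the denominator $\sqrt{\sigma_0^2 + r_n^2 \sigma_1^2}$ tends to $\sqrt{\sigma_0^2 + \varphi^2 \sigma_1^2}$. The step deserving the most care is the positivity of this limiting denominator: since $X_0$ and $X_1$ are absolutely continuous they are nondegenerate, so $\sigma_0^2 > 0$ and $\sigma_1^2 > 0$, and the denominator stays bounded away from zero along the whole sequence. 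This is exactly what licenses the quotient limit law and rules out a $0/0$ degeneracy, and it is where the absolute-continuity hypothesis is genuinely used. The quotient limit then gives $Y_n(\omega) \to Y(\omega)$, and since $\omega$ was arbitrary the convergence holds for all $\omega \in \Omega$.

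Finally, the cdf formula (\ref{cc1}) follows by elementary rearrangement. Writing $D = \sqrt{\sigma_0^2 + \varphi^2 \sigma_1^2} > 0$, the event $\{Y \leq x\}$ equals $\{X_0 + \varphi X_1 - (\mu_0 + \varphi \mu_1) \leq xD\}$; multiplying the threshold out and transposing the mean term yields $\{X_0 + \varphi X_1 \leq x\sqrt{\sigma_0^2 + \varphi^2 \sigma_1^2} + (\mu_0 + \varphi \mu_1)\}$, which is precisely (\ref{cc1}). Since $D > 0$ the direction of the inequality is preserved throughout, so no case analysis is required.
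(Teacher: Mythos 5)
Your proposal is correct and takes essentially the same route as the paper: the paper offers no separate proof of Theorem \ref{Theorem 3}, treating it (as in its one-line proof of Theorem \ref{Theorem 4}) as an immediate pointwise consequence of $\lim_{n\rightarrow \infty }a_{n}/a_{n-1}=\varphi$ applied to the numerator and denominator separately, which is exactly your fixed-$\omega$ argument. Your added care about the denominator being bounded away from zero and the rearrangement yielding (\ref{cc1}) simply fills in details the paper leaves implicit.
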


\bigskip

\bigskip

It is clear that the pdf of $X_{0}+\varphi X_{1}$ is
\begin{equation}
f_{X_{0}+\varphi X_{1}}(x)=\frac{1}{\varphi }\int\limits_{0}^{\infty
}f_{X_{0}}(x-t)f_{X_{1}}(\frac{t}{\varphi })dt.  \label{aa1}
\end{equation}%
and the pdf of $Y$ is then
\begin{equation}
f_{Y}(x)=\sqrt{\sigma _{0}^{2}+\varphi ^{2}\sigma _{1}^{2}}f_{X_{0}+\varphi
X_{1}}(x\sqrt{\sigma _{0}^{2}+\varphi ^{2}\sigma _{1}^{2}}+(\mu _{0}+\varphi
\mu _{1}))  \label{aaa1}
\end{equation}

\begin{example}
\label{Example 1}Let $\ X_{0}$ and $X_{1}$ be iid random variables having
exponential distribution with parameter $\lambda =1,$ then from (\ref{aa1})
we have
\begin{eqnarray*}
f_{X_{0}+\varphi X_{1}}(x) &=&\frac{1}{\varphi }\int\limits_{0}^{x}\exp
(-x-t)\exp (t-\frac{t}{\varphi })dt \\
&=&\frac{\exp (-x)}{\varphi -1}[\exp (x(1-\frac{1}{\varphi }))-1].
\end{eqnarray*}%
Therefore,
\begin{eqnarray*}
P\{Y &\leq &x\}=P\{X_{0}+\varphi X_{1}\leq x\sqrt{\sigma _{0}^{2}+\varphi
^{2}\sigma _{1}^{2}}+(\mu _{0}+\varphi \mu _{1})\} \\
&=&\int\limits_{0}^{c(x)}\left\{ \frac{\exp (-t)}{\varphi -1}(\exp (t(1-%
\frac{1}{\varphi }))-1)\right\} dt,
\end{eqnarray*}%
where $c(x)=x\sqrt{\sigma _{0}^{2}+\varphi ^{2}\sigma _{1}^{2}}+(\mu
_{0}+\varphi \mu _{1}).$ \ And the pdf is
\begin{eqnarray*}
f_{Y}(x) &=&\left\{
\begin{tabular}{ll}
$\sqrt{\sigma _{0}^{2}+\varphi ^{2}\sigma _{1}^{2}}\left\{ \frac{\exp (-c(x))%
}{\varphi -1}\left[ \exp \left( c(x)(1-\frac{1}{\varphi })\right) -1\right]
\right\} ,$ & $x\geq -\frac{(\mu _{0}+\varphi \mu _{1})}{\sqrt{\sigma
_{0}^{2}+\varphi ^{2}\sigma _{1}^{2}}}$ \\
$0$ & $Otherwise$%
\end{tabular}%
\right. \\
&=&\left\{
\begin{tabular}{ll}
$\sqrt{1+\varphi ^{2}}\left\{ \frac{\exp (-c(x))}{\varphi -1}\left[ \exp
\left( c(x)(1-\frac{1}{\varphi })\right) -1\right] \right\} ,$ & $x\geq -%
\frac{1+\varphi }{\sqrt{1+\varphi ^{2}}}$ \\
$0$ & $Otherwise$%
\end{tabular}%
\right. .
\end{eqnarray*}%
\includegraphics[scale=0.30]{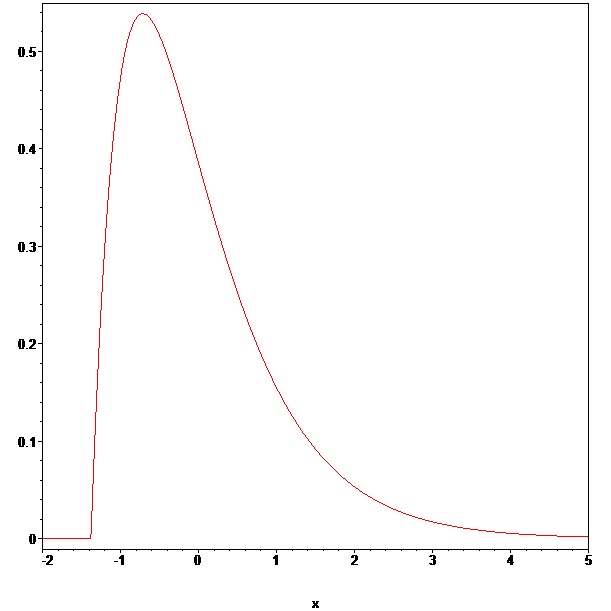}

\begin{tabular}{l}Figure 3. The graph of pdf $f_{Y}(x)$
\end{tabular}%
\\\\\

\end{example}

\begin{example}
\label{Example 2}\bigskip Let $X_{0}$ and $X_{1}$ be independent random
variables with $Uniform(0,1)$ distribution. Then from (\ref{aa1}) we have
\begin{equation*}
f_{X_{0}+\varphi X_{1}}(x)=\left\{
\begin{array}{ccc}
\frac{x}{\varphi }, & 0\leq x\leq 1 &  \\
\frac{1}{\varphi }, & 1\leq x\leq \varphi &  \\
\frac{1-x}{\varphi }+1, & \varphi \leq x\leq 1+\varphi &  \\
0, & elsehwere &
\end{array}%
\right. .
\end{equation*}%
This is a trapezoidal pdf with graph given below in Figure 4.
\\\\

\includegraphics[scale=0.20]{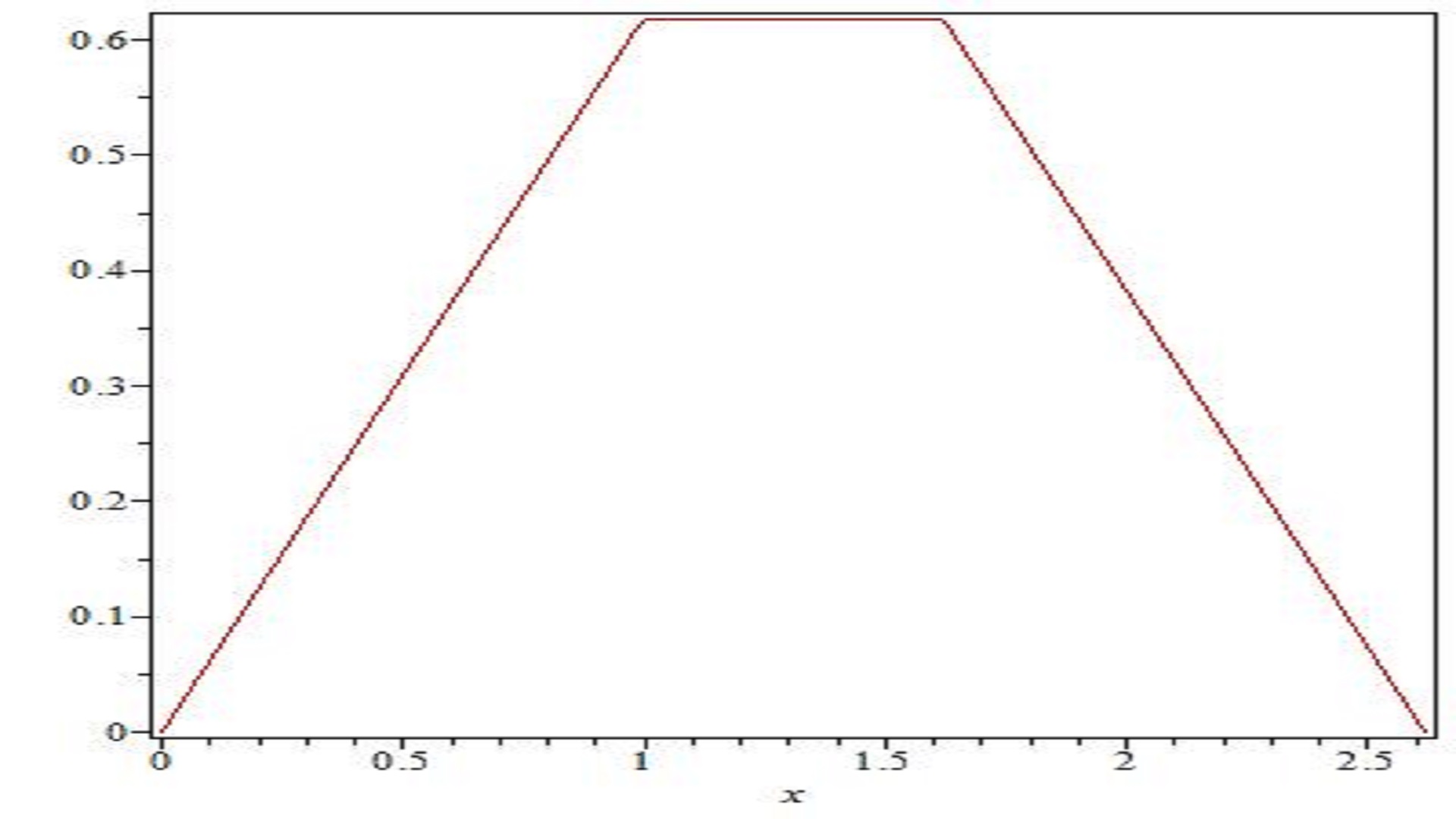}

\begin{tabular}{l}Figure 4. The graph of \ $f_{X_{0}+\varphi X_{1}}(x)$
\end{tabular}
\\\\\\


To find the distribution of limiting random variable $Y,$ we consider
\begin{equation*}
P\{Y\leq x\}=P\{X_{0}+\varphi X_{1}\leq x\sqrt{\sigma _{0}^{2}+\varphi
^{2}\sigma _{1}^{2}}+(\mu _{0}+\varphi \mu _{1})\}
\end{equation*}%
It is clear that
\begin{eqnarray*}
\mu _{0} &=&\mu _{1}=1/2,\text{ }\sigma _{0}^{2}=\sigma _{1}^{2}=1/12, \\
a &=&\sqrt{\sigma _{0}^{2}+\varphi ^{2}\sigma _{1}^{2}}=\sqrt{\frac{%
1+\varphi ^{2}}{12}},b=\mu _{0}+\varphi \mu _{1}=\frac{1+\varphi }{2}
\end{eqnarray*}
and the cdf of $Y$ is
\begin{eqnarray*}
F_{Y}(x) &=&P\{Y\leq x\}=P\{X_{0}+\varphi X_{1}\leq ax+b\}= \\
&&\left\{
\begin{tabular}{lll}
$0$ & $x\leq -\frac{b}{a}$ &  \\
$\frac{1}{\varphi }\int\limits_{0}^{ax+b}udu=\frac{(ax+b)^{2}}{2\varphi },$
& $-\frac{b}{a}\leq x\leq \frac{1-b}{a}$ &  \\
$\frac{1}{2\varphi }+\frac{1}{\varphi }\int\limits_{1}^{ax+b}du=\frac{1}{%
2\varphi }+\frac{ax+b-1}{\varphi },$ & $\frac{1-b}{a}\leq x\leq \frac{%
\varphi -b}{a}$ &  \\
$%
\begin{array}{c}
\frac{1}{2\varphi }+\frac{1}{\varphi }+\frac{1}{\varphi }\int\limits_{%
\varphi }^{ax+b}(\frac{1-u}{\varphi }+1)du \\
=\frac{2ax+2b+2ax\varphi +2b\varphi -a^{2}x^{2}-2axb-b^{2}-\varphi ^{2}-1}{%
2\varphi }%
\end{array}%
$ & $\frac{\varphi -b}{a}\leq x\leq \frac{1+\varphi -b}{a}$ &  \\
$1$ & $x\geq \frac{1+\varphi -b}{a}.$ &
\end{tabular}%
\right.
\end{eqnarray*}%

The pdf of $Y$ is
\begin{equation*}
f_{Y}(x)=\left\{
\begin{array}{cccc}
0, & x<-\frac{b}{a}\text{ \ or }x>\frac{1+\varphi -b}{a} &  &  \\
\frac{(ax+b)a}{\varphi }, & -\frac{b}{a}<x\leq \frac{1-b}{a} &  &  \\
\frac{a}{\varphi } & \frac{1-b}{a}<x\leq \frac{\varphi -b}{a} &  &  \\
\frac{a(1+\varphi -b-ax)}{\varphi } & \frac{\varphi -b}{a}<x\leq \frac{%
1+\varphi -b}{a} &  &
\end{array}%
\right. .
\end{equation*}%
%
\\\\
\
\end{example}

\bigskip\ \ \

\bigskip

\subsection{Limits of normalized sums of Fibonacci sequence of random
variables}

Here we are interested in the limiting behavior of sums of members of FSRV.
Consider $S_{n}=\sum\limits_{i=0}^{n}X_{i}$. \ We have
\begin{eqnarray*}
S_{n} &=&X_{0}+X_{1}+\cdots +X_{n}=X_{0}+X_{1}+\sum\limits_{i=2}^{n}X_{i} \\
&=&X_{0}+X_{1}+\sum\limits_{i=2}^{n}(a_{i-1}X_{0}+a_{i}X_{1}) \\
&=&X_{0}+X_{1}+X_{0}\sum\limits_{i=2}^{n}a_{i-1}+X_{1}\sum%
\limits_{i=2}^{n}a_{i} \\
&=&X_{0}+X_{1}+X_{0}\sum\limits_{i=1}^{n-1}a_{i}+X_{1}(\sum%
\limits_{i=1}^{n}a_{i}-a_{1}) \\
&=&X_{0}+X_{1}+X_{0}(a_{n+1}-1)+X_{1}(a_{n+2}-1-a_{1}) \\
&=&a_{n+1}X_{0}+(a_{n+2}-1)X.
\end{eqnarray*}%
Since%
\begin{equation*}
\sum\limits_{i=1}^{n}a_{i}=a_{n+2}-1.
\end{equation*}%
Therefore%
\begin{eqnarray*}
S_{n} &=&X_{0}+X_{1}+\cdots +X_{n} \\
&=&a_{n+1}X_{0}+(a_{n+2}-1)X_{1}.
\end{eqnarray*}%
The pdf of $S_{n}$ is
\begin{equation}
f_{S_{n}}(x)=\frac{1}{a_{n+1}(a_{n+1}-1)}\int\limits_{-\infty }^{\infty
}f_{X_{0}}(\frac{x-t}{a_{n+1}})f_{X_{1}}(\frac{t}{a_{n+2}-1)})dt.  \label{q3}
\end{equation}

\begin{theorem}
\bigskip \label{Theorem 4}Under conditions of Theorem 3 for a sequence $%
X_{0},X_{1},$ $X_{n}=a_{n-1}X_{0}+a_{n}X_{1},$ $n=2,3,...$ $\ \ $we have%
\begin{eqnarray*}
ES_{n} &=&a_{n+1}\mu _{0}+(a_{n+2}-1)\mu _{1} \\
Var(S_{n}) &=&a_{n+1}^{2}\sigma _{0}^{2}+(a_{n+2}-1)^{2}\sigma _{1}^{2}
\end{eqnarray*}%
\begin{equation*}
\frac{S_{n}-E(S_{n})}{\sqrt{var(S_{n})}}\rightarrow Y\text{ as }n\rightarrow
\infty ,\text{ for all }\omega \in \Omega ,
\end{equation*}%
where $Y$ has cdf \ (\ref{cc1}).
\end{theorem}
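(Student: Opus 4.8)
The plan is to reduce everything to the closed form $S_{n}=a_{n+1}X_{0}+(a_{n+2}-1)X_{1}$ already established just before the statement, and then to repeat essentially verbatim the argument used for Theorem \ref{Theorem 3}, the only new wrinkle being that the coefficient ratio $\frac{a_{n+2}-1}{a_{n+1}}$ still converges to $\varphi$ despite the additive $-1$.

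First I would dispatch the two moment formulas. Taking expectations in $S_{n}=a_{n+1}X_{0}+(a_{n+2}-1)X_{1}$ and using linearity gives at once $ES_{n}=a_{n+1}\mu _{0}+(a_{n+2}-1)\mu _{1}$. For the variance I would invoke the standing independence of $X_{0}$ and $X_{1}$ (inherited from the hypotheses of Theorem \ref{Theorem 3}, which the limiting density (\ref{aa1}) already presupposes), so that the cross term vanishes and $Var(S_{n})=a_{n+1}^{2}\sigma _{0}^{2}+(a_{n+2}-1)^{2}\sigma _{1}^{2}$.

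Next I would handle the normalized sum. Writing out the standardized variable and factoring $a_{n+1}$ out of both numerator and denominator yields
\[
\frac{S_{n}-ES_{n}}{\sqrt{Var(S_{n})}}=\frac{(X_{0}-\mu _{0})+\frac{a_{n+2}-1}{a_{n+1}}(X_{1}-\mu _{1})}{\sqrt{\sigma _{0}^{2}+\left(\frac{a_{n+2}-1}{a_{n+1}}\right)^{2}\sigma _{1}^{2}}}.
\]
For each fixed $\omega$ this is an ordinary real-valued function of the single deterministic sequence $c_{n}:=\frac{a_{n+2}-1}{a_{n+1}}$. Since $c_{n}=\frac{a_{n+2}}{a_{n+1}}-\frac{1}{a_{n+1}}$ and $a_{n+1}\rightarrow \infty$, the limit relation $\frac{a_{n+1}}{a_{n}}\rightarrow \varphi$ already used in the proof of Theorem \ref{Theorem 2} gives $c_{n}\rightarrow \varphi$. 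Substituting $c_{n}\rightarrow \varphi$ and using continuity of the displayed expression in $c_{n}$ (the denominator is bounded away from zero because $\sigma _{0}^{2}+\varphi ^{2}\sigma _{1}^{2}>0$) produces pointwise convergence, for every $\omega \in \Omega$, to
\[
\frac{X_{0}+\varphi X_{1}-(\mu _{0}+\varphi \mu _{1})}{\sqrt{\sigma _{0}^{2}+\varphi ^{2}\sigma _{1}^{2}}}=Y,
\]
which is exactly the limiting variable of Theorem \ref{Theorem 3}, whose cdf is therefore (\ref{cc1}).

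The computations are entirely routine; the one point deserving a word of justification is that the additive constant $-1$ in $a_{n+2}-1$ is asymptotically negligible, so that $c_{n}$ and $\frac{a_{n+2}}{a_{n+1}}$ share the same limit $\varphi$, after which everything coincides with the proof of Theorem \ref{Theorem 3}. I do not anticipate any real obstacle, since the convergence is genuinely for \emph{all} $\omega$: no division by a random quantity occurs (the denominator is a fixed positive constant in the limit), so no exceptional set such as $\Omega _{1}$ is needed. The only place where care is warranted is the independence hypothesis used in the variance formula; were $X_{0}$ and $X_{1}$ merely dependent, the variance would acquire the extra term $2a_{n+1}(a_{n+2}-1)\,Cov(X_{0},X_{1})$, but the convergence claim itself would be unaffected, as it is purely pointwise and never invokes the variance decomposition.
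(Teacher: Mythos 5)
Your proposal is correct and follows essentially the same route as the paper: both reduce to the closed form $S_{n}=a_{n+1}X_{0}+(a_{n+2}-1)X_{1}$, factor $a_{n+1}$ out of the standardized sum, and pass to the limit using $\frac{a_{n+2}}{a_{n+1}}\rightarrow \varphi$ and $\frac{1}{a_{n+1}}\rightarrow 0$ to recover the limiting variable $Y$ of Theorem \ref{Theorem 3}. Your explicit remarks on the independence needed for the variance formula and on the absence of any exceptional set are sound refinements of what the paper leaves implicit.
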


\begin{proof}
Indeed,
\begin{eqnarray*}
&&\frac{S_{n}-E(S_{n})}{\sqrt{var(S_{n})}} \\
&=&\frac{X_{0}+(\frac{a_{n+2}}{a_{n+1}}-\frac{1}{a_{n+1}})X_{1}-(\mu _{0}+(%
\frac{a_{n+2}}{a_{n+1}}-\frac{1}{a_{n+1}})\mu _{1}}{\sqrt{\sigma _{0}^{2}+(%
\frac{a_{n+2}}{a_{n+1}}-\frac{1}{a_{n+1}})^{2}\sigma _{1}^{2}},} \\
&&\overset{}{\rightarrow }\frac{X_{0}+\varphi X_{1}-(\mu _{0}+\varphi \mu
_{1})}{\sqrt{\sigma _{0}^{2}+\varphi ^{2}\sigma _{1}^{2}}}=Y,\text{ as }%
n\rightarrow \infty .
\end{eqnarray*}
\end{proof}

\bigskip

\begin{example}
\label{Example 3}Let \ $X_{0}$ and $X_{1}$ be iid exponential(1) random
variables. Then the pdf of $S_{n}$ is
\begin{eqnarray}
f_{S_{n}}(x) &=&\frac{1}{a_{n+1}(a_{n+2}-1)}\int\limits_{0}^{x}\exp (\frac{%
x-t}{a_{n+1}})\exp (\frac{t}{a_{n+2}-1)})dt  \notag \\
&=&\frac{\exp (-\frac{x}{a_{n+1}})}{a_{n+1}-a_{n+2}+1}\left( 1-\exp
(-x\left( \frac{1}{a_{n+2}-1}-\frac{1}{a_{n+1}}\right) \right) .  \label{q3a}
\end{eqnarray}%
\end{example}

\bigskip

\bigskip

\section{Joint distributions of $X_{n}$ and $X_{n+k}$}

Next, we focus on the joint distributions of $X_{n}=a_{n-1}X_{0}+a_{n}X_{1}$
and $X_{n+k}=a_{n+k-1}X_{0}+a_{n+k}X_{1},$ for $k\geq 1.$

\begin{theorem}
\label{Theorem 5}The joint pdf of $X_{n}$ and $X_{n+k}$ is%
\begin{eqnarray}
&&f_{X_{n},X_{n+k}}(y_{0},y_{1})  \notag \\
&=&\frac{1}{a_{k}}f_{X_{0},X_{1}}(\frac{a_{n+k}y_{0}-y_{1}a_{n}}{%
(-1)^{n}a_{k}},\frac{a_{n-1}y_{1}-a_{n+k-1}y_{0}}{(-1)^{n}a_{k}}).
\label{bb2}
\end{eqnarray}
\end{theorem}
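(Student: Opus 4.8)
The plan is to regard the pair $(X_n,X_{n+k})$ as the image of the initial pair $(X_0,X_1)$ under an invertible linear map, and then to read off the joint density from the standard multivariate change-of-variables (Jacobian) formula. Writing
\begin{equation*}
\begin{pmatrix} X_n \\ X_{n+k}\end{pmatrix}=A\begin{pmatrix} X_0 \\ X_1\end{pmatrix},\qquad A=\begin{pmatrix} a_{n-1} & a_n \\ a_{n+k-1} & a_{n+k}\end{pmatrix},
\end{equation*}
this formula gives $f_{X_n,X_{n+k}}(y_0,y_1)=|\det A|^{-1}\,f_{X_0,X_1}\!\left(A^{-1}(y_0,y_1)^{\top}\right)$, valid provided $A$ is invertible. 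Because $\det A$ will turn out to be $\pm a_k$, invertibility is exactly the standing hypothesis $k\ge 1$, which forces $a_k\neq 0$.

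The crux, and the step I expect to be the main obstacle, is the evaluation of the Jacobian determinant
\begin{equation*}
\det A=a_{n-1}a_{n+k}-a_n a_{n+k-1}.
\end{equation*}
I claim this equals $(-1)^n a_k$, a d'Ocagne-type Fibonacci identity. The cleanest route I would take is to reduce it to Cassini's identity $a_{n-1}a_{n+1}-a_n^2=(-1)^n$: applying the addition formulas $a_{n+k}=a_{k+1}a_n+a_k a_{n-1}$ and $a_{n+k-1}=a_k a_n+a_{k-1}a_{n-1}$ together with the recurrence $a_{k+1}-a_{k-1}=a_k$, the determinant collapses to $a_k\left(a_{n-1}a_{n+1}-a_n^2\right)=(-1)^n a_k$. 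Equivalently, one can proceed by induction on $k$ (with base case $k=1$ being exactly Cassini), or invoke the Fibonacci $Q$-matrix, whose $n$-th power has determinant $(-1)^n$.

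With the determinant in hand, the remaining steps are routine. Inverting $A$ gives
\begin{equation*}
A^{-1}=\frac{1}{(-1)^n a_k}\begin{pmatrix} a_{n+k} & -a_n \\ -a_{n+k-1} & a_{n-1}\end{pmatrix},
\end{equation*}
so the two coordinates of $A^{-1}(y_0,y_1)^{\top}$ are $\frac{a_{n+k}y_0-a_n y_1}{(-1)^n a_k}$ and $\frac{a_{n-1}y_1-a_{n+k-1}y_0}{(-1)^n a_k}$, precisely the arguments appearing in the statement. Finally, since $a_k>0$ for $k\ge 1$ we have $|\det A|^{-1}=1/a_k$, and substituting into the change-of-variables formula yields (\ref{bb2}). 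Thus the only genuine content is the determinant identity; everything else is elementary linear algebra.
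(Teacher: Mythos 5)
Your proposal is correct and follows essentially the same route as the paper: both treat $(X_n,X_{n+k})$ as an invertible linear image of $(X_0,X_1)$, apply the Jacobian change-of-variables formula, and identify the determinant $a_{n-1}a_{n+k}-a_na_{n+k-1}$ as $(-1)^na_k$. The only difference is that the paper simply cites d'Ocagne's identity (via Dickson) for this last step, whereas you derive it from the addition formulas and Cassini's identity, which makes your argument self-contained but does not change its substance.
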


\begin{proof}
Let%
\begin{equation}
\left\{
\begin{array}{c}
y_{0}=a_{n-1}x_{0}+a_{n}x_{1} \\
y_{1}=a_{n+k-1}x_{0}+a_{n+k}x_{1}%
\end{array}%
\right. .  \label{b1}
\end{equation}%
The Jacobian of this linear transformation is \ $%
J=a_{n-1}a_{n+k}-a_{n}a_{n+k-1}$ and the solution of the system of equations
(\ref{b1}) is
\begin{equation*}
\left\{
\begin{array}{c}
x_{0}=(a_{n+k}y_{0}-y_{1}a_{n})/(a_{n-1}a_{n+k}-a_{n}a_{n+k-1}) \\
x_{1}=(a_{n-1}y_{1}-a_{n+k-1}y_{0})/(a_{n-1}a_{n+k}-a_{n}a_{n+k-1})%
\end{array}%
\right. .
\end{equation*}%
Therefore, the joint pdf of $X_{n}$ and $X_{n+k}$ is
\begin{eqnarray}
&&f_{X_{n},X_{n+k}}(y_{0},y_{1})  \notag \\
&=&\frac{1}{\left\vert a_{n-1}a_{n+k}-a_{n}a_{n+k-1}\right\vert }%
f_{X_{0},X_{1}}(\frac{a_{n+k}y_{0}-y_{1}a_{n}}{a_{n-1}a_{n+k}-a_{n}a_{n+k-1}}%
,  \notag \\
&&\frac{a_{n-1}y_{1}-a_{n+k-1}y_{0}}{a_{n-1}a_{n+k}-a_{n}a_{n+k-1}}).
\label{b2}
\end{eqnarray}%
Using the d'Ocagne's identity \ \ (see e.g. Dickson (1966)) $\
a_{m}a_{n+1}-a_{m+1}a_{n}=(-1)^{n}a_{m-n}$ we have \ $%
J=a_{n-1}a_{n+k}-a_{n}a_{n+k-1}=-(a_{n+k-1}a_{n}-a_{n+k}a_{n-1})=(-1)^{n}a_{k}.
$ Therefore,%
\begin{eqnarray*}
&&f_{X_{n},X_{n+k}}(y_{0},y_{1}) \\
&=&\frac{1}{a_{k}}f_{X_{0},X_{1}}(\frac{a_{n+k}y_{0}-y_{1}a_{n}}{%
(-1)^{n}a_{k}},\frac{a_{n-1}y_{1}-a_{n+k-1}y_{0}}{(-1)^{n}a_{k}}).
\end{eqnarray*}
\end{proof}

\bigskip

\begin{corollary}
If $X_{0}$ and $X_{1}$ are independent then
\begin{eqnarray}
&&f_{X_{n},X_{n+k}}(x,y)  \notag \\
&=&\frac{1}{a_{k}}f_{X_{0}}\left( \frac{a_{n+k}x-ya_{n}}{(-1)^{n}a_{k}}%
\right) f_{X_{1}}\left( \frac{a_{n-1}y-a_{n+k-1}x}{(-1)^{n}a_{k}}\right) .
\label{bb3}
\end{eqnarray}
\end{corollary}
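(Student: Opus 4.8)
The plan is to derive this directly from Theorem \ref{Theorem 5}, which already supplies the joint density $f_{X_n,X_{n+k}}$ for an arbitrary absolutely continuous initial pair. Since the only additional hypothesis here is the independence of $X_0$ and $X_1$, the whole argument reduces to substituting the product-of-marginals factorization into formula (\ref{bb2}); no new transformation or computation is required.

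First I would recall from Theorem \ref{Theorem 5} that, for any joint density $f_{X_0,X_1}$,
\begin{equation*}
f_{X_n,X_{n+k}}(x,y)=\frac{1}{a_k}\,f_{X_0,X_1}\left(\frac{a_{n+k}x-ya_n}{(-1)^n a_k},\frac{a_{n-1}y-a_{n+k-1}x}{(-1)^n a_k}\right).
\end{equation*}
Next, since $X_0$ and $X_1$ are independent, their joint density factors as $f_{X_0,X_1}(u,v)=f_{X_0}(u)\,f_{X_1}(v)$ for almost every $(u,v)$. Setting $u=\dfrac{a_{n+k}x-ya_n}{(-1)^n a_k}$ and $v=\dfrac{a_{n-1}y-a_{n+k-1}x}{(-1)^n a_k}$ and inserting this factorization into the displayed formula yields (\ref{bb3}) at once.

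There is essentially no obstacle here, so the only point worth flagging is the normalizing constant. The general change-of-variables formula (\ref{b2}) carries the factor $1/|J|$ with Jacobian $J=a_{n-1}a_{n+k}-a_n a_{n+k-1}$, and the d'Ocagne identity invoked in Theorem \ref{Theorem 5} gives $J=(-1)^n a_k$. Because $a_k>0$ for $k\geq 1$, we have $|J|=a_k$, so the constant is genuinely $1/a_k$ while the sign $(-1)^n$ survives only inside the arguments of the two marginals, exactly as recorded in (\ref{bb3}).
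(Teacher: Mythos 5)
Your proposal is correct and matches the paper's intent exactly: the corollary is stated in the paper as an immediate consequence of Theorem \ref{Theorem 5}, obtained by substituting the factorization $f_{X_0,X_1}(u,v)=f_{X_0}(u)f_{X_1}(v)$ into (\ref{bb2}), which is precisely what you do. Your added remark that $|J|=a_k$ while the sign $(-1)^n$ persists only inside the arguments is a sound clarification of a point the paper handles (somewhat tersely) in the proof of Theorem \ref{Theorem 5}.
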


\begin{example}
\label{Example 4}Let $X_{0}$ and $X_{1}$ be iid exponential(1) random
variables, $n=4,k=3.$ Then \ $a_{n+k}=a_{7}=13,$ $a_{n+k-1}=a_{6}=8,$ $%
a_{n-1}=a_{3}=2,$ $a_{n}=a_{4}=3$ $\ $and $a_{k}=a_{3}=2.$ Then from (\ref%
{bb3})
\begin{eqnarray}
&&f_{X_{4},X_{7}}(x,y)  \notag \\
&=&\left\{
\begin{array}{cc}
\begin{array}{c}
\frac{1}{2}\exp (-(13/2)x+(3/2)y) \\
\times \exp (-y+4x),%
\end{array}
& x\geq 0\text{ and }4x\leq y\leq 13/3x \\
0 & otherwise%
\end{array}%
\right. .  \notag \\
&=&\left\{
\begin{array}{cc}
\frac{1}{2}\exp (-(5/2)x)\exp (y/2) & x\geq 0\text{ and }4x\leq y\leq 13/3x
\\
0 & otherwise%
\end{array}%
\right.  \label{c4}
\end{eqnarray}%
The marginal pdf's are
\begin{equation}
f_{X_{4}}(x)=\left\{
\begin{array}{cc}
e^{-\frac{x}{3}}-e^{-\frac{x}{2}}, & x\geq 0 \\
0, & otherwise%
\end{array}%
\right.  \label{m1}
\end{equation}%
and%
\begin{equation*}
f_{X_{7}}(x)=\left\{
\begin{array}{cc}
\frac{1}{5}\left( e^{-\frac{x}{13}}-e^{-\frac{x}{8}}\right) , & x\geq 0 \\
0, & otherwise%
\end{array}%
\right. .
\end{equation*}
\end{example}

\begin{example}
\label{Example 5}Let $X_{0}$ and $X_{1}$ be independent uniform(0,1) random
variables. Again, let $n=4,k=3.$ Then \ $a_{n+k}=a_{7}=13,$ $%
a_{n+k-1}=a_{6}=8,$ $a_{n-1}=a_{3}=2,$ $a_{n}=a_{4}=3$ $\ $and $%
a_{k}=a_{3}=2.$ Then
\begin{eqnarray}
&&f_{X_{n},X_{n+k}}(x,y)  \notag \\
&=&\frac{1}{a_{k}}f_{X_{0}}\left( \frac{a_{n+k}x-ya_{n}}{(-1)^{n}a_{k}}%
\right) f_{X_{1}}\left( \frac{a_{n-1}y-a_{n+k-1}x}{(-1)^{n}a_{k}}\right)
\notag \\
&=&\left\{
\begin{array}{cc}
\frac{1}{a_{k}} & 0\leq \frac{a_{n+k}x-ya_{n}}{(-1)^{n}a_{k}}\leq 1,0\leq
\frac{a_{n-1}y-a_{n+k-1}x}{(-1)^{n}a_{k}}\leq 1 \\
0, & otherwise%
\end{array}%
\right.  \label{c1}
\end{eqnarray}%
(To check whether (\ref{c1}) is a pdf, \ we need to show $%
\int\limits_{0}^{1}\int\limits_{0}^{1}f_{X_{n},X_{n+k}}(x,y)dxdy=1.$
Indeed, \ \
\begin{eqnarray*}
&&\int\limits_{0}^{1}\int\limits_{0}^{1}f_{X_{n},X_{n+k}}(x,y)dxdy \\
&=&\frac{1}{a_{k}}\underset{0\leq \frac{a_{n+k}x-ya_{n}}{(-1)^{n}a_{k}}\leq
1,0\leq \frac{a_{n-1}y-a_{n+k-1}x}{(-1)^{n}a_{k}}\leq 1}{\int \int }dxdy \\
&=&\left\{
\begin{array}{c}
\begin{array}{c}
a_{n+k}x-ya_{n}=t,a_{n-1}y-a_{n+k-1}x=s \\
x=\frac{ta_{n-1}+sa_{n}}{(-1)^{n}a_{k}},y=\frac{sa_{n+k}+ta_{n+k-1}}{%
(-1)^{n}a_{k}}%
\end{array}
\\
t\leq (-1)^{n}a_{k},s\leq (-1)^{n}a_{k} \\
J=\left\vert
\begin{array}{cc}
\frac{a_{n-1}}{(-1)^{n}a_{k}} & \frac{a_{n}}{(-1)^{n}a_{k}} \\
\frac{a_{n+k-1}}{(-1)^{n}a_{k}} & \frac{a_{n+k}}{(-1)^{n}a_{k}}%
\end{array}%
\right\vert =\frac{a_{n-1}a_{n+k}-a_{n}a_{n+k-1}}{(-1)^{2n}a_{k}^{2}}=\frac{%
(-1)^{n}a_{k}}{(-1)^{2n}a_{k}^{2}}%
\end{array}%
\right\} \\
&=&\frac{1}{a_{k}}\int\limits_{0}^{(-1)^{n}a_{k}}\int%
\limits_{0}^{(-1)^{n}a_{k}}\frac{1}{\left\vert (-1)^{n}a_{k}\right\vert }%
dxdy=1.)
\end{eqnarray*}%
\ For $n=4$ and $k=3,$ the
\begin{eqnarray}
&&f_{X_{4},X_{7}}(x,y)  \notag \\
&=&\frac{1}{2}f_{X_{0}}\left( \frac{13x-3y}{2}\right) f_{X_{1}}\left( \frac{%
2y-8x}{2}\right)  \notag \\
&=&\left\{
\begin{array}{cc}
\frac{1}{2}, & 0\leq \frac{13x-3y}{2}\leq 1,0\leq \frac{2y-8x}{2}\leq 1 \\
0, & otherwise%
\end{array}%
\right. .  \label{c2}
\end{eqnarray}%
\end{example}

\section{Prediction of future values}

\bigskip It is well known that with respect to squared error loss, the best
unbiased predictor of $X_{n+k},$ given $X_{n}$ \ is%
\begin{equation*}
E\{X_{n+k}\mid X_{n}\}.
\end{equation*}%
Let
\begin{eqnarray}
g(x) &=&E\{X_{n+k}\mid X_{n}=x\}  \notag \\
&=&\frac{1}{f_{X_{n}}(x)}\int\limits_{-\infty }^{\infty
}yf_{X_{n},X_{n+k}}(x,y)dy,  \label{cc2}
\end{eqnarray}%
then $E\{X_{n+k}\mid X_{n}\}=g(X_{n}).$ Using (\ref{a000}) and (\ref{bb2}) $%
\ $\ from (\ref{cc2}) one can easily calculate the best predictor of $%
X_{n+k},$ given $X_{n}.$

\begin{example}
\label{Example 6}Let $X_{0}$ and $X_{1}$ be independent exponential(1)
random variables. Let $n=4,k=3.$ Then \ $a_{n+k}=a_{7}=13,$ $%
a_{n+k-1}=a_{6}=8,$ $a_{n-1}=a_{3}=2,$ $a_{n}=a_{4}=3$ $\ $and $%
a_{k}=a_{3}=2 $ as in Example 4. Then from (\ref{c4}) we can write
\begin{eqnarray*}
g(x) &=&\frac{1}{e^{-\frac{x}{3}}-e^{-\frac{x}{2}}}\int\limits_{4x}^{13/3x}y%
\frac{1}{2}\exp (-(5/2)x)\exp (y/2)dy \\
&=&\frac{1}{3}\frac{12e^{-x/2}-6e^{-x/2}+6e^{-x/3}-13e^{-x/3}}{%
e^{-x/2}-e^{-x/3}} \\
&=&4x-2-\frac{x}{3(e^{-x/6}-1)},
\end{eqnarray*}
\end{example}

Therefore,
\begin{equation*}
X_{7}\simeq 4X_{4}-2-\frac{X_{4}}{3(e^{-X_{4}/6}-1).}.
\end{equation*}

\begin{conclusion}
In this note, we considered the sequence of random variables $\{X_{0},X_{1},$
$X_{n}=X_{n-2}+X_{n-1},$ $n=2,3,..\}$ which is equivalent to  $\left\{
X_{0},X_{1},X_{n}=\
a_{n-1}X_{0}+a_{n}X_{1},n=2,3,...\right\} ,$ where $X_{0}$
and $X_{1}$ are \ absolutely continuous random variables with joint pdf $%
f_{X_{0},X_{1}},$ and $a_{n}=a_{n-1}+a_{n-2},$ $n=2,3,...$ $(a_{0}=0,$ $%
a_{1}=1)$ is the Fibonacci sequence. In the paper, the sequence $%
X_{n},n=0,1,2,...$ \ is referred to as the Fibonacci Sequence of Random \
Variables. We investigated the limiting properties of some ratios and
normalizing sums of this sequence. For exponential and uniform distribution
cases, we derived some interesting limiting properties that reduce to the
golden ratio and also investigated the joint distributions of $X_{n}$ and $%
X_{n+k}.$ The considered random sequence has benefical properties and may be
worthy of attention associated with random sequences and autoregressive
models. \ \
\end{conclusion}

\bigskip

\section{Appendix}

For illustration of the behaviour of FSRV, the simulated values of random
variables $X_{0}$ and $X_{1}$ from uniform (0,1) and standard normal
distribution are obtained. The corresponding codes in R are also given. The
corresponding code in R for uniform(0,1) distribution is:
\\\\
$>$a$<-$seq(1:10); for (i in 3:10) a[i]=a[i-1]+a[i-2]; x$<-$runif(10); y$<-$runif(10); z$<-$numeric(10); for (i in 2:10) z[i]=a[i]*x[i]+a[i-1]*x[i-1]; c$<-$seq(1:10); plot(c,z,col="red",bg="yellow",pch=22,bty="l");
\\\\
The corresponding code in R for standard normal distribution is:
\\\\
$>$a$<-$seq(1:10); for (i in 3:10) a[i]=a[i-1]+a[i-2]; x$<-$rnorm(10); y$<-$rnorm(10); z$<-$numeric(10); for (i in 2:10) z[i]=a[i]*x[i]+a[i-1]*x[i-1]; c$<-$seq(1:10); plot(c,z,col="red",bg="yellow",pch=22,bty="l");
\\\\


\bigskip

\end{document}